\documentclass[11pt]{article}

\usepackage[utf8]{inputenc}
\usepackage[T1]{fontenc}
\usepackage{lmodern}
\usepackage{geometry}
\usepackage{amsmath,amssymb,amsthm,mathtools}
\usepackage{booktabs}
\usepackage{array}
\usepackage{microtype}
\usepackage{xcolor}
\usepackage{hyperref}
\usepackage[nameinlink,noabbrev]{cleveref}
\usepackage{enumitem}

\hypersetup{
  colorlinks=true,
  linkcolor=blue!55!black,
  citecolor=blue!55!black,
  urlcolor=blue!55!black,
  pdftitle={Contested Cluster Selectors},
  pdfauthor={Karthik Sheshadri}
}

\newtheorem{definition}{Definition}[section]
\newtheorem{theorem}[definition]{Theorem}
\newtheorem{proposition}[definition]{Proposition}
\newtheorem{corollary}[definition]{Corollary}

\theoremstyle{definition}
\newtheorem{observation}[definition]{Observation}
\newtheorem{empirical}[definition]{Empirical finding}
\newtheorem{remark}[definition]{Remark}

\newcommand{\SAT}{\textsc{SAT}}
\newcommand{\VC}{\textsc{VC}}
\newcommand{\DPLL}{\textsc{DPLL}}

\newcommand{\XORSAT}{\textsc{XORSAT}}
\newcommand{\FERAM}{\textsc{FERAM}}
\newcommand{\CCS}{\textsc{CCS}}
\newcommand{\UCSC}{\textsc{UCSC}}

\newcommand{\Sol}{\mathsf{Sol}}
\newcommand{\Cl}{\mathsf{Cl}}
\newcommand{\TV}{\mathsf{TV}}

\newcommand{\E}{\mathbb{E}}
\newcommand{\Prb}{\mathbb{P}}
\newcommand{\bits}{\{0,1\}}

\title{Contested Cluster Selectors:\\ Local Ambiguity, Normal Forms, and Backtracking Cost in Random Constraint Satisfaction}
\author{Karthik Sheshadri\\
karthiksheshadri217@gmail.com}
\date{\today}

\begin{document}
\maketitle

\begin{abstract}
We introduce and empirically investigate \emph{contested cluster selectors} (\CCS): variables that are non-backbone, carry information about solution-cluster identity, and are repeatedly but unreliably forced by local propagation during backtracking search.  In instrumented \DPLL{} experiments on random 3-\SAT{} near the empirical satisfiability threshold and on near-optimal random \VC{} instances, a small number of such variables accounts for a large fraction of observed backtracking cost.  Pinning two or three high-contestedness variables to solution-consistent values reduces backtracking by 70--80\% on the reference instances studied, and a static degree--polarity metric yields a simple $2^k$ enumeration heuristic with a reported $3.7\times$ speedup over baseline \DPLL{} at $n=50$.

A polynomial control experiment on random 3-\XORSAT{} sharpens the interpretation.  Gaussian elimination exposes the true affine selector coordinates, whereas \DPLL{} churn concentrates on pivot variables chosen in a poor coordinate system.  Thus clustering and non-backbone status are not enough: the empirical hardness signal is \emph{local contestation} that remains after available polynomial-time normal forms.  We formalize this distinction through safe coordinate exposers and the \emph{unavoidable contested selector cost} (\UCSC).  We also prove an ordered single-pass eraser-memory lower bound: any ordered \FERAM{} that recovers a $k$-bit cluster label from a distribution with residual min-entropy $k-\eta$ using $S$ bits succeeds with probability at most $2^{S+\eta-k}$.  The paper positions \CCS/\UCSC{} as a structural program connecting backdoors, solution-space geometry, low-degree barriers, and Schaefer-style algebraic normal forms.  We do not claim a proof of $P\ne NP$; rather, we isolate the normal-form barrier that any such extension would need to overcome.
\end{abstract}

\paragraph{Keywords.} random 3-\SAT; DPLL; backdoors; solution-space clusters; XORSAT; Schaefer dichotomy; low-degree algorithms; overlap gap property; streaming lower bounds.

\section{Introduction}

Backtracking solvers for satisfiability often appear to revisit the same local information many times before a global contradiction is discovered.  This paper studies that phenomenon at the level of variables: which variables are repeatedly accessed, why they are accessed, and whether the resulting cost is an artifact of a poor algorithmic coordinate system or a structural feature of the instance.

The empirical starting point is an instrumentation of chronological \DPLL{} with unit propagation.  On hard random 3-\SAT{} instances near the empirical threshold $\alpha=m/n\approx 4.27$, and on near-optimal random \VC{} instances, the variables with the highest churn are not the backbone variables fixed across all satisfying assignments.  Instead, churn concentrates on variables that remain non-backbone but appear to select among solution clusters.  These variables are repeatedly forced and unset; moreover, when local propagation forces them, the force is only weakly correlated with a value consistent with a final satisfying assignment.  We call such variables \emph{contested cluster selectors}.

The central conceptual point is that three properties must be separated.
\begin{enumerate}[label=(\roman*)]
  \item A variable may be \emph{non-backbone}: it takes both Boolean values among satisfying assignments.
  \item A variable may be \emph{cluster-informative}: its value carries mutual information about which solution cluster contains the sampled satisfying assignment.
  \item A variable may be \emph{locally contested}: local propagation often emits a value for the variable but does so with near-chance accuracy relative to the globally compatible cluster value.
\end{enumerate}
Non-backbone and cluster-informative variables also occur in polynomial-time problems.  The polynomial controls are decisive.  In \XORSAT{}, the solution set is an affine subspace; Gaussian elimination exposes free coordinates and eliminates all residual affine contestation.  In 2-\SAT{}, strongly connected components of the implication graph expose a global implication order.  Therefore local ambiguity alone cannot explain hardness.  The stronger invariant is local contestation that is not removed by any available polynomial-time normal form.

\paragraph{Contributions.}
This manuscript makes five contributions.
\begin{enumerate}[leftmargin=2.5em]
  \item It formalizes a variable-level notion of contested cluster selector that separates cluster information from operational local contestation.
  \item It reports empirical evidence from instrumented \DPLL{} traces on random 3-\SAT{}, random \VC{}, and random 3-\XORSAT{} controls.
  \item It gives a simple conflict-cone extraction procedure and a static contestedness score that approximate the high-churn variables without oracle access to solutions.
  \item It introduces safe coordinate exposers and \UCSC{} as a normal-form framework: \XORSAT{}, 2-\SAT{}, and Horn-\SAT{} have zero residual contestation under their standard polynomial normal forms, while random 3-\SAT{} appears not to.
  \item It proves an ordered \FERAM{} lower bound showing that single-pass bounded-space algorithms cannot recover a high-entropy cluster label without retaining linear information in that label.
\end{enumerate}

\paragraph{Scope.}
The paper does not prove that arbitrary polynomial-time algorithms fail on random 3-\SAT{}, nor does it prove $P\ne NP$.  Under a sufficiently broad definition, a zero-selector search coordinate exposer is equivalent to a polynomial-time search algorithm.  Thus a lower bound against all such exposers for 3-\SAT{} would itself be a major complexity lower bound.  The proposed program is instead to prove \UCSC{} lower bounds in restricted models--streaming, local, low-degree, statistical-query, or stable algorithms--and to identify the missing normal-form obstruction for arbitrary polynomial time.

\section{Background}

\subsection{DPLL and unit propagation}

The Davis--Putnam--Logemann--Loveland procedure is a complete backtracking algorithm for CNF satisfiability that repeatedly chooses a branching literal, simplifies the formula, and applies unit propagation\cite{DavisPutnam1960,DavisLogemannLoveland1962}.  Unit propagation assigns the unique remaining literal of any clause whose other literals have been falsified.  The choice of branching variables can change the size of the search tree by orders of magnitude.

Modern CDCL solvers add clause learning, restarts, and sophisticated branching heuristics.  This paper deliberately studies an instrumented chronological \DPLL{} core because it exposes a clean trace of local forcing, backtracking, and variable re-access.  The resulting metrics should be interpreted as structural probes, not as claims of superiority over industrial SAT solving.

\subsection{Backbones, clusters, and random SAT geometry}

Let $F$ be a satisfiable formula on variables $x_1,\ldots,x_n$.  The solution set is
\[
  \Sol(F)=\{\sigma\in\bits^n: \sigma\models F\}.
\]
A variable $x_i$ is a backbone variable if $\sigma_i$ is constant over all $\sigma\in\Sol(F)$; otherwise it is non-backbone.  The statistical-physics picture of random $k$-\SAT{} predicts that, in hard density regimes, the solution space decomposes into many well-separated clusters or pure states.  This 1-step replica-symmetry-breaking picture underlies survey propagation\cite{MezardParisiZecchina2002} and has rigorous analogues in large-$k$ random CSP regimes\cite{AchlioptasCojaOghlan2008,DingSlySun2022}.

For $k=3$, the exact satisfiability threshold is not rigorously known, but the empirical and physics literature places the transition near $m/n\approx 4.26$--$4.27$.  Our random 3-\SAT{} experiments use $m=\lfloor 4.27n\rfloor$.

\subsection{Backdoors and normal forms}

Backdoor sets capture the idea that a small set of variables can expose an easy residual problem.  A weak backdoor has at least one assignment leading to an instance solvable by a chosen subsolver; a strong backdoor requires this for every assignment\cite{WilliamsGomesSelman2003,NishimuraRagdeSzeider2004}.  The \CCS{} variables in this paper are related to backdoors but are selected by trace statistics and are required to be non-backbone and cluster-informative.

The tractable Boolean CSP classes in Schaefer's dichotomy have standard normal forms: all-zero/all-one witnesses for 0-valid/1-valid languages, forward chaining for Horn, dual forward chaining for dual Horn, implication-graph SCC condensation for bijunctive/2-\SAT{}, and Gaussian elimination for affine/\XORSAT{} languages\cite{Schaefer1978,AspvallPlassTarjan1979,DowlingGallier1984}.  These normal forms provide the main polynomial controls for the \CCS{} hypothesis.

\section{Contested Cluster Selectors}

This section gives the formal vocabulary used in the rest of the paper.  The definition is intentionally distributional: whether a variable is contested depends not only on a formula but also on a sampling rule for clusters, local prefixes, and propagation messages.

\subsection{Cluster selectors}

Assume a cluster partition $\Cl(F)$ of $\Sol(F)$ is fixed.  In empirical finite instances the partition may be obtained by Hamming-distance connectivity at a chosen radius; in asymptotic random CSPs it should correspond to the relevant pure-state decomposition.

Let $C\in\Cl(F)$ be a sampled cluster and $\sigma\in C$ a sampled satisfying assignment.  Entropies below are conditional on $F$ unless otherwise stated.

\begin{definition}[Cluster-frozen selector]
A variable $v$ is an $(h_0,\delta)$ \emph{cluster-frozen selector} if
\[
  H(\sigma_v\mid F)\ge h_0
  \qquad\text{and}\qquad
  H(\sigma_v\mid F,C)\le \delta .
\]
Equivalently, $v$ is globally nontrivial but nearly fixed once the solution cluster is known.  In particular,
\[
  I(\sigma_v;C\mid F)\ge h_0-\delta .
\]
\end{definition}

The term ``selector'' should not be read as saying that one Boolean variable names an entire cluster.  When there are exponentially many clusters, a single variable carries at most one bit of cluster identity.  A set of selectors can collectively address clusters.

\subsection{Local unreconstructability and local contestation}

Let $L_r(F,v)$ be the rooted radius-$r$ signed factor-graph neighborhood of $v$.  A strong information-theoretic local ambiguity condition is
\[
  I(\sigma_v; L_r(F,v)\mid F)\le \epsilon,
\]
or, in the balanced binary case,
\[
  \TV\!\bigl(\mathcal L(L_r\mid \sigma_v=0),\mathcal L(L_r\mid \sigma_v=1)\bigr)\le \epsilon .
\]
This condition bounds the advantage of every $r$-local estimator.  It is not, however, the same as being operationally hard for \DPLL{}.  \XORSAT{} free coordinates and simple 2-\SAT{} parity/equality chains may be locally unreconstructable but polynomially exposed by global normal forms.

To capture the empirical phenomenon, let $M_r(F,v,\Pi)$ be the output of a specified local propagation process applied to a radius-$r$ neighborhood under a random boundary condition or \DPLL{} prefix $\Pi$:
\[
  M_r(F,v,\Pi)\in\{0,1,\bot\},
\]
where $\bot$ means that the local process emits no value for $v$.

\begin{definition}[Local contestation]
A variable $v$ is $(\rho,\epsilon)$ \emph{locally contested} relative to $M_r$ if
\[
  \Prb[M_r(F,v,\Pi)\ne\bot]\ge \rho
\]
and
\[
  \Prb[M_r(F,v,\Pi)=\sigma_v\mid M_r(F,v,\Pi)\ne\bot]
  \le \frac12+\epsilon .
\]
\end{definition}

\begin{definition}[Contested cluster selector]
A variable $v$ is an $(h_0,\delta,\rho,\epsilon)$ \emph{contested cluster selector} if it is an $(h_0,\delta)$ cluster-frozen selector and is $(\rho,\epsilon)$ locally contested for the chosen propagation process.
\end{definition}

\begin{remark}[Why contestation is separate from total variation]
The local total-variation condition asks whether any local estimator can distinguish the two values of $v$.  Local contestation asks whether a particular local inference dynamic often commits to a value while being nearly uncorrelated with the globally compatible cluster value.  These differ.  A variable may be locally silent with $\Prb[M_r\ne\bot]\approx0$, as in affine free coordinates, or locally unreconstructable but globally exposed by SCCs, as in 2-\SAT{} equality chains.  The empirical 3-\SAT{} variables of interest are neither silent nor reliably forced.
\end{remark}

\section{Experimental Methodology}

\subsection{Instrumented DPLL traces}

For each formula and each random variable ordering, the solver records:
\begin{align*}
  \mathsf{forced\_count}[v] &= \text{number of times $v$ is assigned by unit propagation},\\
  \mathsf{unset\_count}[v] &= \text{number of times $v$ is undone during backtracking},\\
  \mathsf{churn}[v] &= \mathsf{forced\_count}[v]+\mathsf{unset\_count}[v].
\end{align*}
Across $N$ random orderings, the mean churn of $v$ is denoted $\overline{\mathsf{churn}}[v]$.

The solver also records the clauses causing conflicts.  Let $\mathsf{conflict\_freq}(C)$ be the fraction of random orderings in which clause $C$ is observed as a conflict clause.  Define
\[
  \mathsf{conflict\_incidence}(v)
  =\sum_{C\ni v}\mathsf{conflict\_freq}(C)
\]
and
\[
  \mathsf{cw\_churn}(v)
  = \overline{\mathsf{churn}}[v]\cdot \mathsf{conflict\_incidence}(v).
\]
The top variables by $\mathsf{cw\_churn}$ are the conflict-cone candidates.

\subsection{Static contestedness score}

For a signed CNF variable $v$, let $d(v)$ be its occurrence count, and let $p(v)$ and $q(v)$ be its positive and negative occurrence counts.  The static contestedness score is
\[
  \mathsf{contestedness}(v)
  = d(v)\cdot \frac{\min\{p(v),q(v)\}}{p(v)+q(v)}.
\]
The score favors high-degree variables with balanced polarity.  It is computable without solver traces and is used in the enumeration heuristic.

\subsection{Instance families}

Random 3-\SAT{} instances have $n$ variables and $m=\lfloor4.27n\rfloor$ clauses, each clause selecting three variables uniformly without replacement and independent random signs.  The reference instance uses $n=15$, $m=64$, and seed 7.

Random \VC{} instances use $G(n,p)$ with $p=0.45$ and budget $k$ set two below a greedy cover size, producing near-optimal search instances in the tested range.

Random 3-\XORSAT{} instances are sparse systems of parity equations with three variables per equation, sampled near the known random 3-\XORSAT{} threshold $m/n\approx0.918$\cite{DuboisMandler2002,PittelSorkin2016}.  Gaussian elimination over $\mathbb F_2$ gives the affine control.

\section{Empirical Results for Random 3-SAT}

\subsection{The reference instance}

The $n=15$, $m=64$ reference instance has exactly five satisfying assignments by exhaustive enumeration.  The backbone variables are
\[
  x_2=0,\quad x_5=1,\quad x_8=0,\quad x_{10}=0,\quad x_{11}=1.
\]
The highest-frequency conflict clauses in the trace are
\begin{center}
\begin{tabular}{lll}
\toprule
Clause & Literals & Frequency \\
\midrule
$C_{38}$ & $(\neg x_3, x_{12}, \neg x_9)$ & 45\% \\
$C_{13}$ & $(\neg x_5, x_{10}, \neg x_2)$ & 38\% \\
$C_{46}$ & $(\neg x_{15}, \neg x_{11}, x_6)$ & 35\% \\
\bottomrule
\end{tabular}
\end{center}
Conflict-cone extraction identifies $x_{15}$ and $x_9$ as the main contested selector candidates without using the enumerated solutions.

\subsection{Backbone/pivot inversion}

\begin{empirical}[Backbone/pivot inversion]
In the reference instance, the variables driving \DPLL{} backtracking cost are not the backbone variables.  They are non-backbone variables that are repeatedly forced with near-chance correctness.
\end{empirical}

The observed forcing rates are:
\begin{center}
\begin{tabular}{llll}
\toprule
Variable & Status & Solution-consistent forcing rate & Interpretation \\
\midrule
$x_{11}$ & backbone, true & 78\% & locally corroborated \\
$x_5$ & backbone, true & 74\% & locally corroborated \\
$x_{15}$ & non-backbone selector candidate & 54\% & locally contested \\
$x_9$ & non-backbone selector candidate & 54\% & locally contested \\
\bottomrule
\end{tabular}
\end{center}
Thus the hard variables are not those whose values are globally fixed, but those whose values are globally consequential while locally unreliable.

The qualitative mechanism is as follows.  If \DPLL{} branches into a selector value incompatible with the final cluster being explored, the local neighborhood does not immediately refute the choice.  Failure appears deeper in the tree after many dependent assignments have accumulated.  When the solver backtracks, the selector and downstream variables are unset and later revisited, producing concentrated churn.

\subsection{Conflict-cone extraction}

\begin{center}
\fbox{%
\begin{minipage}{0.92\linewidth}
\textbf{Conflict-cone extraction.}
Given a formula $F$ and $N$ random variable orderings:
\begin{enumerate}[leftmargin=1.5em]
  \item Run \DPLL{} on each ordering.
  \item Record all conflict clauses and per-variable churn.
  \item Compute $\mathsf{conflict\_freq}$, $\mathsf{conflict\_incidence}$, and $\mathsf{cw\_churn}$.
  \item Return the top-$k$ variables by conflict-weighted churn.
\end{enumerate}
\end{minipage}}
\end{center}

For the reference instance, both $x_{15}$ and $x_9$ appear in the top five variables by conflict-weighted churn.  The conflict clauses $C_{38}$ and $C_{46}$ act as direct pivot detectors in the observed traces: they fire when $x_9$ or $x_{15}$ is driven into values incompatible with the explored solution cluster.

\subsection{Static approximation and enumeration}

The static contestedness score correlates with mean churn at approximately $\rho=0.58$ across the tested random 3-\SAT{} instances.  The top-three variables by static score overlap with top-three churn variables in more than 70\% of cases.

The resulting heuristic is a small enumeration over candidate selectors.
\begin{center}
\fbox{%
\begin{minipage}{0.92\linewidth}
\textbf{CCS enumeration heuristic.}
Input: CNF formula $F$, parameter $k$.
\begin{enumerate}[leftmargin=1.5em]
  \item Compute $\mathsf{contestedness}(v)$ for every variable $v$.
  \item Let $P$ be the top-$k$ variables.
  \item For each assignment $\tau:P\to\bits$:
  \begin{enumerate}[leftmargin=1.5em]
    \item simplify $F$ by $\tau$ and run unit propagation;
    \item reject immediately if a conflict occurs;
    \item otherwise run \DPLL{} from the simplified instance.
  \end{enumerate}
  \item Return the first satisfying assignment found, or report failure after all branches.
\end{enumerate}
\end{minipage}}
\end{center}

With $k=3$, the reported performance is:
\begin{center}
\begin{tabular}{rrrrr}
\toprule
$n$ & Instances solved & Baseline calls & Heuristic calls & Speedup \\
\midrule
20 & 10/10 & 28 & 13 & $2.2\times$ \\
30 & 10/10 & 54 & 16 & $3.4\times$ \\
50 & 10/10 & 533 & 146 & $3.7\times$ \\
\bottomrule
\end{tabular}
\end{center}
Unit propagation rejects most of the $2^3$ candidate assignments immediately, leaving only a small number of substantive \DPLL{} calls.

\subsection{Adversarial sanity check}

The heuristic is not a general-purpose SAT breakthrough.  On pigeonhole-principle encodings, which are hard for resolution and do not exhibit the same typical clustered selector structure, the heuristic does not improve the baseline.  On a relaxed satisfiable pigeonhole variant, it can be substantially slower because the static score chooses the wrong variables.  This failure mode supports the intended interpretation: the heuristic exploits a typical-case cluster-contestation signal, not worst-case NP-completeness.

\section{Vertex Cover and Reduction Structure}

\subsection{Vertex-cover analogue}

For \VC{}, variables indicate whether vertices are included in the cover.  Unit propagation has the form: if an edge has one endpoint excluded, the other endpoint must be included.  The natural contestedness score is vertex degree.

On random $G(n,0.45)$ instances with $n=18$ and near-optimal budgets, degree-based enumeration solves 20/20 tested instances with an average of 25 solver calls versus 61 for baseline \DPLL{}, a $2.4\times$ speedup.  The top pivot vertices are non-backbone in all tested instances, while the average backbone size is only about 3\% of vertices.

\subsection{Preservation under the standard SAT-to-VC reduction}

The standard Karp reduction from 3-\SAT{} to \VC{} creates a pair of literal vertices for each variable and a triangle gadget for each clause\cite{Karp1972}.  It preserves exact satisfying assignments through the usual complement relationship between independent sets and vertex covers.

\begin{proposition}[Reduction preservation, semantic form]
Under the standard 3-\SAT{} to \VC{} reduction, non-backbone status of a SAT variable corresponds to non-backbone status of its literal-pair choices in the \VC{} instance.  Moreover, clause conflicts correspond to failures to satisfy the associated clause gadget under the budget.
\end{proposition}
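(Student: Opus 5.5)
The plan is to make the Karp reduction explicit and then establish a bijection between satisfying assignments of $F$ and vertex covers of size exactly the budget. Given $F$ with variables $x_1,\ldots,x_n$ and clauses $C_1,\ldots,C_m$ (each with three literal occurrences), build $G_F$ as follows. Each variable $x_i$ contributes an edge $\{u_i,\bar u_i\}$. Each clause $C_j$ contributes a triangle $\{t_{j,1},t_{j,2},t_{j,3}\}$. Each triangle vertex $t_{j,\ell}$ is joined to the literal vertex for the $\ell$-th literal of $C_j$. The budget is $K=n+2m$.

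The first step is a structural lemma: every cover of size at most $K$ has size exactly $K$. The reason is that it must contain at least one endpoint of each of the $n$ variable edges and at least two vertices of each of the $m$ vertex-disjoint triangles. Counting therefore forces exactly one of $u_i,\bar u_i$ and exactly two triangle vertices per clause. This lets us define a map $\Phi$ from covers to assignments, $\Phi(S)_i=1$ iff $u_i\in S$. One checks $\Phi(S)\models F$: the uncovered triangle vertex $t_{j,\ell}$ forces its literal neighbour into $S$, so that literal is true.

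Conversely, given $\sigma\models F$, put the true literal vertices in $S$. For each clause, choose one satisfied position $\ell$ and put the two other triangle vertices in $S$. This gives a cover of size $K$ with $\Phi(S)=\sigma$, so $\Phi$ is surjective onto $\Sol(F)$.

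The main obstacle is that $\Phi$ is not injective: clauses with several true literals admit several choices of the uncovered triangle vertex. The correspondence is therefore many-to-one. I would handle this by stating the backbone claim only for the literal-pair coordinates $(\mathbf 1[u_i\in S],\mathbf 1[\bar u_i\in S])$, which are exactly determined by $\Phi(S)$. The claim then follows immediately from surjectivity plus well-definedness. If $x_i$ takes both values on $\Sol(F)$, some covers contain $u_i$ and others contain $\bar u_i$. If $x_i$ is backbone with value $b$, every optimal cover picks the same literal vertex. Triangle coordinates may be non-backbone even when the variables are backbone, and the proposition says nothing about them.

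For the conflict statement, I would interpret a partial assignment $\tau$ as forcing the corresponding literal vertices in or out of $S$. A clause $C_j$ becomes falsified under $\tau$ exactly when all three literal vertices adjacent to triangle $j$ are excluded. Covering the three connecting edges then forces all three triangle vertices into $S$. Combined with the per-gadget lower bounds, this exceeds $K$, so the budget is violated locally at gadget $j$. Conversely, if some gadget cannot be completed within two vertices, then all its literal neighbours are excluded, which means $C_j$ is falsified. This gives the second sentence, read as a semantic (not trace-level) correspondence.
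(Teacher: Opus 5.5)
Your proof is correct and follows the same route as the paper's sketch: map satisfying assignments to budget-$K$ covers through the choice of literal vertices, and use the triangle gadget to show that a falsified clause forces a third triangle vertex into the cover, which exceeds the budget. Your counting lemma (every cover of size at most $n+2m$ contains exactly one literal vertex per variable edge and exactly two vertices per triangle) is what justifies the ``and conversely'' direction that the paper asserts without argument. Your explicit treatment of the many-to-one map on triangle coordinates is also a worthwhile clarification.
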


\begin{proof}
A satisfying assignment chooses one literal vertex per variable gadget and one satisfied literal per clause triangle in the complementary independent-set view.  Two satisfying assignments differing on $x_i$ therefore induce two feasible covers differing on the corresponding literal-pair vertices, and conversely.  If a SAT partial assignment falsifies a clause, the corresponding triangle gadget cannot be completed within the intended budget together with the variable-gadget choices.  Thus semantic non-backbone status and local conflict witnesses are transported by the reduction.  Operational churn is also mirrored for branch orders that respect the variable-gadget correspondence.
\end{proof}

The proposition is intentionally semantic.  Arbitrary solver heuristics on the reduced graph can introduce new operational artifacts, but the underlying selector information is encoded rather than destroyed.

\section{Polynomial Controls: XORSAT and 2-SAT}

\subsection{XORSAT: bad DPLL coordinates versus affine normal form}

A 3-\XORSAT{} instance is a sparse linear system over $\mathbb F_2$.  Its solution set, when nonempty, is an affine subspace
\[
  x_0+\ker A .
\]
Gaussian elimination identifies pivot variables and free variables; after the free variables are chosen, the pivots are determined by back-substitution.

The control experiment on random 3-\XORSAT{} near $m/n\approx0.918$ finds that top \DPLL{} churn variables are usually not Gaussian-elimination free coordinates.  Across the reported runs, the top-three churn variables intersect the Gaussian-elimination free variables only $10/81$ times, or about 12\%.  Churn concentrates on GE pivots.

The mechanism is not intrinsic hardness.  A random-order \DPLL{} branches frequently on pivot variables before the free coordinates have been fixed.  If a pivot is set inconsistently with later free-coordinate choices, subsequent propagation reveals a contradiction, and churn accumulates on the pivot.  Gaussian elimination removes this artifact by changing coordinates first.

\begin{observation}[Local silence versus local contestation]
\XORSAT{} free variables are non-backbone and cluster-addressing, but they are locally silent for unit propagation: the local process typically emits $\bot$.  Random 3-\SAT{} selector candidates are locally contested: unit propagation often emits a value but has near-chance correctness.  This is the empirical distinction captured by \CCS{}.
\end{observation}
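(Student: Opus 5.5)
The plan is to split the observation into its two halves and to make each precise relative to the definition of local contestation, taking $M_r$ to be unit propagation restricted to the radius-$r$ neighborhood under a random \DPLL{} prefix $\Pi$. The 3-\SAT{} half is empirical: it is the backbone/pivot inversion finding, namely that $x_{15}$ and $x_9$ are forced often and forced correctly only about $54\%$ of the time. That places them in the $(\rho,\epsilon)$ regime with $\rho$ bounded away from $0$ and $\epsilon\approx 0.04$. The real content to argue is therefore the \XORSAT{} half: free coordinates are non-backbone and cluster-addressing, yet unit propagation typically emits $\bot$ on them.

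\textbf{Non-backbone and cluster-addressing.} Write the solution set as $x_0+\ker A$ and fix Gaussian-elimination free coordinates $f_1,\dots,f_d$. The projection onto these coordinates is a bijection from $\Sol(F)$ to $\bits^d$. Hence each $\sigma_{f_j}$ is uniform over solutions, which gives $H(\sigma_{f_j}\mid F)=1$ and in particular non-backbone status. For cluster-addressing, take the affine partition induced by any subspace decomposition, for example the cosets of the subspace spanned by the non-selected free directions. Then $\sigma_{f_j}$ is constant on each part, so $H(\sigma_{f_j}\mid F,C)=0$. This makes $f_j$ a $(1,0)$ cluster-frozen selector.

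\textbf{Local silence.} For an equation $x_a\oplus x_b\oplus x_c=s$, unit propagation emits a value for a variable only once the other two variables are assigned. So $M_r(F,f,\Pi)\neq\bot$ requires a fully assigned pair of co-occurring neighbors. I would then argue that the free coordinates of a random sparse system near $m/n\approx0.918$ are, with high probability, chosen by elimination among low-degree variables, typically degree $0$ or $1$ in the 2-core-peeled hypergraph. Under a prefix $\Pi$ that assigns each variable independently with probability $q$, the chance that some incident equation has both other variables set is at most $d(f)\,q^2$. Consequently $\Prb[M_r\ne\bot]\le \E[d(f)]\,q^2$, which is small for short prefixes.

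Moreover, even when a value is emitted, the symmetry $\sigma\mapsto\sigma+e$ with $e\in\ker A$ and $e_f=1$ shows that the emitted value matches $\sigma_f$ with probability exactly $1/2$ over the uniform solution. So whenever the variable is not silent, it is maximally contested rather than corroborated.

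\textbf{Main obstacle.} The hard step is making ``typically emits $\bot$'' rigorous. Free coordinates are not canonical: they depend on the elimination order, and a deep \DPLL{} prefix with $q$ close to $1$ forces everything. I would first fix the convention that free coordinates are the variables left unpeeled by the leaf-removal order, and restrict to prefixes of bounded depth. If that fails, I would state silence only relative to the distribution of prefixes observed in the traces. In that case this half would be recorded as an empirical claim, matching the $10/81$ overlap statistic, rather than as a theorem.
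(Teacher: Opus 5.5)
Your 3-\SAT{} half and your fallback for the \XORSAT{} half match the paper, which gives no proof of this observation. It rests only on the $54\%$ forcing table, the $10/81$ churn-overlap statistic, and the informal mechanism that random-order \DPLL{} sets pivots before free coordinates. Your attempted rigorous version of the \XORSAT{} half does not work.

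\textbf{The silence bound is not specific to free coordinates.} A 3-clause can force $v$ only when both of its other literals are assigned (and falsified). So the same union bound gives $\Prb[M_r\ne\bot]\le d(v)q^2$ for every 3-\SAT{} variable, and for every \XORSAT{} pivot. Degrees are $O(1)$ in both ensembles (mean $\approx 12.8$ at $4.27$, $\approx 2.75$ at $0.918$). Hence the shallow-prefix convention that makes free coordinates silent also makes the 3-\SAT{} selector candidates silent, contradicting the ``often emits a value'' half. In effect you would be comparing a shallow-prefix $M_r$ for \XORSAT{} with the full-trace $M_r$ behind the $54\%$ figures. The paper's contrast concerns forcing over whole \DPLL{} runs. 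There prefixes are deep and correlated, and neighbours are themselves assigned by propagation cascades, which your bound ignores.

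\textbf{The exact-$1/2$ symmetry argument is generic in the same way.} It uses only that $\sigma$ is uniform and independent of $\Pi$. It therefore applies to every non-backbone \XORSAT{} variable, pivots included, since a non-constant coordinate of an affine subspace takes each value on exactly half of it. It also applies to every balanced 3-\SAT{} variable. Under the coupling behind the paper's forcing rates, unit propagation from a prefix consistent with $\sigma$ is always correct. Accuracy is then governed by how often the prefix is already inconsistent, which you do not address.

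\textbf{The identification of the free coordinates is also off.} In leaf removal, a degree-$1$ variable is deleted together with its unique equation and becomes that equation's pivot. So degree $1$ at elimination time marks a pivot, not a free coordinate. The non-core free coordinates are the variables all of whose equations are consumed by \emph{other} leaves. The ``variables left unpeeled'' are not a free coordinate set either: $0.918$ lies above the 2-core threshold $\approx 0.818$, and the nonempty core is itself constrained.

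\textbf{The cluster-addressing step targets the wrong variables.} In this regime the natural clusters of random 3-\XORSAT{} are indexed by solutions of the core subsystem. Every non-core free coordinate is uniform within each cluster, hence not cluster-frozen at all. You sidestep this only by choosing a partition tailored to $f$. By that device, partitioning by $\sigma_v$ itself, every non-backbone variable of any CSP is an $(H(\sigma_v\mid F),0)$ selector, so the property becomes vacuous. The variables your silence argument targets and the variables that genuinely address clusters are therefore different sets. The \XORSAT{} half should stay at the empirical level, as in the paper.
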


\subsection{2-SAT: local ambiguity without residual contestation}

The 2-\SAT{} implication graph yields a linear-time satisfiability algorithm via strongly connected components\cite{AspvallPlassTarjan1979}.  A satisfiable 2-\SAT{} variable is forced false if $x\leadsto\neg x$, forced true if $\neg x\leadsto x$, and non-backbone when neither reachability relation holds.

It is important not to overstate the contrast.  2-\SAT{} can have variables that are non-backbone and locally ambiguous.  For example,
\[
F_n=\bigwedge_{i=1}^{n-1}
(\neg x_i\vee x_{i+1})\wedge(x_i\vee\neg x_{i+1})
\]
enforces $x_1=x_2=\cdots=x_n$ and has exactly two satisfying assignments, $0^n$ and $1^n$.  An interior variable's bounded local neighborhood does not reveal its absolute value.  Nevertheless, SCC condensation exposes the global structure in polynomial time.  Thus 2-\SAT{} has zero residual contested selector cost after preprocessing, even if local ambiguity exists before preprocessing.

\section{Safe Coordinate Exposers and UCSC}

The controls suggest that the relevant invariant is not merely whether selector variables exist, but whether an efficient preprocessing can expose a safe coordinate system before branching.

\begin{definition}[Search safe coordinate exposer]
A search \emph{safe coordinate exposer} for a class $\mathcal C$ is a polynomial-time preprocessing algorithm $P$ that, on input $I\in\mathcal C$, outputs a selector set $S(I)$ and a polynomial-time decoder $D_I$ such that for every assignment $\tau:S(I)\to\bits$, the decoder either returns \textsc{UNSAT} for the restricted instance or returns a satisfying assignment extending $\tau$.
\end{definition}

If $|S|=O(\log n)$, enumerating all selector assignments gives a polynomial-time search algorithm.  Conversely, if a class is already polynomial-time searchable, it has a trivial search SCE with $S=\emptyset$.  Thus lower bounds for arbitrary search SCEs are exactly as hard as lower bounds for polynomial-time search.

\begin{definition}[Parametric safe coordinate exposer]
A parametric SCE additionally requires the selector assignments to address all solutions or all clusters without duplication beyond an allowed equivalence relation.  For \XORSAT{}, Gaussian elimination is a parametric SCE: free variables parameterize the affine solution space and pivots are affine functions of the free coordinates.
\end{definition}

\begin{definition}[Unavoidable contested selector cost]
For a formula $F$ and a class $\mathcal A$ of allowed preprocessing/decoding procedures, define
\[
  \UCSC_{\mathcal A}(F)
  = \min_{P\in\mathcal A}
    \#\{\text{selector decisions that remain locally contested after }P\}.
\]
For a distribution $\mathcal D_n$, define $\UCSC_{\mathcal A}(\mathcal D_n)$ by high-probability or expectation over $F\sim\mathcal D_n$.
\end{definition}

Under all polynomial-time preprocessing, $\UCSC=0$ for search is equivalent to polynomial-time search.  For restricted classes $\mathcal A$, however, \UCSC{} becomes a meaningful lower-bound target.

\subsection{Schaefer and polymorphisms}

Schaefer's dichotomy classifies Boolean CSPs into six tractable families--0-valid, 1-valid, Horn, dual-Horn, bijunctive, and affine--with all other Boolean constraint languages NP-complete\cite{Schaefer1978}.  The tractable cases correspond to algebraic closure operations, or polymorphisms: constants, semilattice operations, majority, and affine/minority operations.  The finite-domain CSP dichotomy extends this viewpoint through weak near-unanimity/Taylor-type polymorphisms\cite{Bulatov2017,Zhuk2020}.

In the present language, these polymorphisms are not themselves coordinate selectors.  Rather, they are the algebraic reason normal forms exist.  Affine closure gives Gaussian elimination; bijunctive closure gives implication-graph SCCs; Horn closure gives least-model propagation.  Full 3-\SAT{} lacks these closures, which is the algebraic signature of the normal-form barrier.  This observation explains the controls but does not by itself prove an unconditional lower bound against arbitrary polynomial-time preprocessing.

\section{An Ordered FERAM Lower Bound}

This section records a restricted single-pass theorem.  It is an information lower bound, not a general computational lower bound.

\begin{definition}[Free-order eraser RAM]
A \FERAM{} may choose which input atom or clause to read next.  Once read, the item is erased and cannot be read again.  The machine has $S$ bits of work space.  In the ordered variant below, the input is divided into two blocks $X$ and $Y$, and the machine must read all of $X$ before reading $Y$.
\end{definition}

The task is \textsc{Cluster-ID}: given an instance and a selector set $P=\{p_1,\ldots,p_k\}$, output the $k$-bit label $B=(B_1,\ldots,B_k)$, where $B_i$ is the value of selector $p_i$ in the target cluster.

\begin{theorem}[Ordered FERAM lower bound]
Let $(X,Y,B)$ be distributed so that $B\in\bits^k$ and
\[
  H_\infty(B\mid Y)\ge k-\eta .
\]
Let $M$ be any deterministic ordered \FERAM{} that reads $X$ first, then $Y$, uses at most $S$ bits of workspace, and outputs $\widehat B$.  Then
\[
  \Prb[\widehat B=B]\le 2^{S+\eta-k}.
\]
The same bound holds for randomized machines after conditioning on their random seed.
\end{theorem}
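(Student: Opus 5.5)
The plan is a compression argument: the memory state at the block boundary is the only channel through which $X$ can influence the output. Since the machine is deterministic and must finish reading $X$ before touching $Y$, its configuration at the moment it begins reading $Y$ is a function $Z=f(X)$. This configuration consists of the $S$-bit workspace contents, together with the finite control, which I absorb into the $S$ bits (or charge as an additive constant). After that point the machine reads $Y$ and then outputs, so $\widehat B = g(Z,Y)$ for a deterministic $g$. The point of the ordered model is that $X$ enters only through $Z$, and $Z$ ranges over at most $2^S$ values.

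The key step is to bound the success probability for each fixed $Y=y$. For each value $z$, the string $g(z,y)$ is a single candidate for $B$. Hence
\[
  \Prb[\widehat B=B\mid Y=y]\le \sum_{z}\Prb[B=g(z,y)\mid Y=y]\le 2^S\cdot \max_b \Prb[B=b\mid Y=y].
\]
This is a union bound over the at most $2^S$ candidate outputs. It needs no independence between $X$ and $B$: whatever correlation $X$ has with $B$ is irrelevant once we allow every possible memory state.

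Next comes averaging over $y$. If $H_\infty(B\mid Y)$ is read as worst-case min-entropy, i.e. $\max_b\Prb[B=b\mid Y=y]\le 2^{\eta-k}$ for every $y$, then the bound $2^{S+\eta-k}$ follows immediately by taking expectation over $Y$. If instead it is read as average (Dodis-style) conditional min-entropy, then $\E_y[\max_b \Prb[B=b\mid Y=y]] = 2^{-H_\infty(B\mid Y)}\le 2^{\eta-k}$, and we get the same bound, because the factor $2^S$ pulls out of the expectation. I would state the proof under the average definition, since it is weaker as a hypothesis and covers both readings.

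For randomized machines, fix the seed $r$ (assumed independent of $(X,Y,B)$). The machine is then deterministic, so the bound applies conditionally, and averaging over $r$ preserves it. The only real obstacle is definitional care. First, the workspace bound must cover the entire state passed across the boundary, including the control state and any read head or position information; otherwise one gets $2^{S+O(1)}$. Second, the output must be produced after $Y$ is read, using only the workspace and $Y$. Once the model guarantees that $Y$ is read after the boundary and erased, so that it cannot leak back into $Z$, the argument is a two-line counting bound.
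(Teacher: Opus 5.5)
Your proof is correct and follows the paper's argument: the state at the $X$/$Y$ boundary takes at most $2^S$ values, so for each fixed $y$ there are at most $2^S$ candidate outputs, each correct with probability at most $2^{\eta-k}$; one then averages over $y$ and conditions on the seed. Your extension to average (Dodis-style) conditional min-entropy slightly strengthens the paper's worst-case reading, and your caveats about the control state and seed independence are reasonable definitional points that the paper leaves implicit.
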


\begin{proof}
After reading $X$, the machine's state is a transcript $T(X)$ taking at most $2^S$ values.  Once $Y=y$ is fixed, the final output is a function $g(T(X),y)$.  Therefore the set of possible outputs for this fixed $y$ has size at most $2^S$:
\[
  R_y=\{g(t,y):t\in\bits^S\},\qquad |R_y|\le2^S.
\]
The min-entropy assumption implies
\[
  \Prb[B=b\mid Y=y]\le 2^{-(k-\eta)}
\]
for every $b$.  Hence
\[
  \Prb[\widehat B=B\mid Y=y]
  \le |R_y|2^{-(k-\eta)}
  \le 2^{S+\eta-k}.
\]
Averaging over $Y$ proves the deterministic statement.  For randomized machines, condition on the random seed and apply the deterministic bound to each seed.
\end{proof}

\begin{corollary}
If $S<k-\eta-1$, then the ordered \FERAM{} succeeds with probability below $1/2$.
\end{corollary}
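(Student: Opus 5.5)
The corollary should follow directly from the Ordered FERAM lower bound (the theorem immediately above), so the plan is to substitute the hypothesis $S<k-\eta-1$ into that bound and simplify the exponent. No new argument about the machine is needed. We apply the theorem verbatim to the same distribution $(X,Y,B)$ with $H_\infty(B\mid Y)\ge k-\eta$, and to the same deterministic or seed-conditioned randomized ordered \FERAM{}.

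The first step is to invoke the theorem to get $\Prb[\widehat B=B]\le 2^{S+\eta-k}$. The second step uses the hypothesis $S<k-\eta-1$, which rearranges to $S+\eta-k<-1$. Since $t\mapsto 2^t$ is strictly increasing, this gives $2^{S+\eta-k}<2^{-1}=1/2$, and chaining the two inequalities yields $\Prb[\widehat B=B]<1/2$.

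For randomized machines, the theorem gives the same bound for each fixed seed. Averaging over seeds preserves the strict inequality, because every conditional success probability is at most $2^{S+\eta-k}<1/2$.

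There is no real obstacle here. The only point to check is bookkeeping: if $S$ is an integer number of bits and $\eta$ need not be, the strict inequality is still inherited directly from the strict hypothesis. It is also worth remarking that the slack of $1$ is more than the argument needs, since $S<k-\eta$ already forces $\Prb[\widehat B=B]<1$, and $S\le k-\eta-1$ already gives success probability at most $1/2$.
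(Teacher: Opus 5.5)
Your proposal is correct and is exactly the intended argument: the paper gives no separate proof, treating the corollary as immediate from the Ordered \FERAM{} theorem, since $S<k-\eta-1$ gives $S+\eta-k<-1$ and hence $\Prb[\widehat B=B]\le 2^{S+\eta-k}<1/2$. Your side remarks on randomized machines and on the slack in the hypothesis are also accurate.
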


\begin{remark}[Free-order gap]
The ordered theorem is essentially a one-way communication bound.  Free-order eraser machines can adaptively pair information from $X$ and $Y$, so the proof no longer applies.  Proving analogous lower bounds for free-order eraser models would require read-once branching-program or related lower-bound techniques; known read-once lower bounds provide a natural starting point\cite{BabaiHajnalSzemerediTuran1987}.
\end{remark}

\section{Connections to Existing Lower-Bound Frameworks}

\subsection{Overlap gap property}

The overlap gap property (OGP) is a geometric obstruction in the space of near-optimal or satisfying assignments.  It has been used to rule out classes of stable, local, and message-passing-like algorithms in random structures\cite{Gamarnik2021,GamarnikJagannath2021}.  OGP does not directly rule out arbitrary safe coordinate exposers: a preprocessing algorithm may be discontinuous and global.  A plausible restricted theorem would show that any stable or local SCE with small selector dimension induces a stable solution-finder, contradicting an OGP lower bound.

\subsection{Low-degree algorithms}

Bresler and Huang prove that low-degree polynomial algorithms fail to find satisfying assignments for random $k$-\SAT{} above a density scale within a constant factor of the best-known algorithmic threshold, and their class includes local algorithms and bounded or mildly growing-round BP/SP-guided decimation\cite{BreslerHuang2021}.  This suggests a low-degree \UCSC{} theorem: no low-degree preprocessing can expose selector coordinates leaving only sublinear residual contestation in the hard density regime.  Such a theorem would not cover Gaussian elimination in general, which is polynomial-time but global and high-degree as a Boolean function of the input matrix.

\subsection{Statistical queries and planted SAT}

Statistical-query lower bounds for planted $k$-\SAT{} show that broad classes of distributional algorithms require many samples or high-order information to recover planted assignments\cite{FeldmanPerkinsVempala2018}.  Quiet planting provides a way to study planted clusters whose generated instances resemble random instances in certain regimes\cite{KrzakalaMezardZdeborova2014}.  These frameworks are natural homes for formalizing cluster-label recovery.  However, recovering a planted solution is not the same as finding any satisfying assignment.

\section{Sharpe Ratio versus Mean Churn}

During trace analysis, one possible ranking statistic was the Sharpe ratio, mean divided by standard deviation.  The following elementary calculation explains why it is poorly matched to the goal of finding high-value variables.

\begin{proposition}
Suppose a variable has churn $C$ with probability $p$ and churn $0$ with probability $1-p$.  Then its Sharpe ratio is
\[
  \frac{\E[X]}{\sqrt{\operatorname{Var}(X)}}=\sqrt{\frac{p}{1-p}},
\]
independent of the magnitude $C$.
\end{proposition}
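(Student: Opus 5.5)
The plan is a direct moment computation for the two-point distribution. Write $X$ for the churn, so that $X=C$ with probability $p$ and $X=0$ with probability $1-p$. I will assume $C>0$ and $0<p<1$; the ratio is undefined when $p\in\{0,1\}$, because the variance vanishes and the denominator is zero.

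The first step computes the moments. We have $\E[X]=pC$ and $\E[X^2]=pC^2$. Hence
\[
\operatorname{Var}(X)=pC^2-p^2C^2=p(1-p)C^2,
\qquad
\sqrt{\operatorname{Var}(X)}=C\sqrt{p(1-p)},
\]
where taking the positive square root uses $C>0$.

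The second step forms the ratio. It equals $pC/\bigl(C\sqrt{p(1-p)}\bigr)$. The factor $C$ cancels, which is exactly the claimed independence from the magnitude. What remains is $p/\sqrt{p(1-p)}=\sqrt{p/(1-p)}$, obtained by writing $p=\sqrt{p}\sqrt{p}$.

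There is no real obstacle here. The only care needed is stating the nondegeneracy hypotheses $C>0$ and $0<p<1$. I would add a brief remark that if $C<0$ the same computation gives the ratio up to a sign, though churn is nonnegative, so this case does not arise. Finally, I would interpret the result in the paper's context: a variable with rare but enormous churn receives the same Sharpe score as one with rare negligible churn. That is why mean churn, not the Sharpe ratio, is used for ranking.
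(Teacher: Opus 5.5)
Your proposal is correct and matches the paper's proof: both compute $\E[X]=pC$ and $\operatorname{Var}(X)=p(1-p)C^2$, then divide so that $C$ cancels. The nondegeneracy hypotheses $C>0$ and $0<p<1$, which you state explicitly, are left implicit in the paper.
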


\begin{proof}
Here $\E[X]=pC$ and $\operatorname{Var}(X)=p(1-p)C^2$.  Dividing gives the stated expression.
\end{proof}

Thus Sharpe score measures activation frequency but discards severity.  Mean churn tracks the expected backtracking reduction from correctly pinning a variable and is therefore the more appropriate objective for oracle-guided analysis.

\section{Limitations and Open Questions}

\paragraph{Empirical scale.}
The experiments reported here are small and intentionally diagnostic.  The $n=15$ instance is completely enumerated, and the reported speedups extend through $n=50$ with ten instances per size.  Larger experiments with public traces are needed to estimate asymptotic selector-set growth.

\paragraph{Cluster definition.}
For finite instances, cluster partitions depend on a chosen Hamming-distance rule.  In asymptotic random CSPs, a rigorous cluster notion should be tied to pure states, whitening cores, or reconstruction behavior.  The definition of \CCS{} is modular enough to accept any such partition, but theorems will depend on the choice.

\paragraph{Search versus recovery.}
The ordered \FERAM{} theorem concerns recovery of a $k$-bit cluster label.  SAT search only requires some satisfying assignment.  A planted recovery lower bound therefore does not automatically imply a SAT-search lower bound.

\paragraph{Normal-form barrier.}
The main open problem is to prove nontrivial \UCSC{} lower bounds for increasingly broad algorithm classes.  The strongest plausible near-term targets are: streaming/eraser models, syntactic strong backdoors into Schaefer classes, low-degree preprocessing, statistical-query recovery, and stable/local algorithms under OGP assumptions.  Arbitrary polynomial-time preprocessing remains the full complexity barrier.

\section{Conclusion}

Contested cluster selectors isolate a specific failure mode of backtracking search: local propagation repeatedly commits to variables that carry global cluster information but whose local signals are nearly uncorrelated with the globally compatible value.  The polynomial controls show why this is stronger than clustering alone.  \XORSAT{} has affine selectors, but Gaussian elimination exposes them safely; 2-\SAT{} can have locally ambiguous global choices, but SCC condensation exposes the implication structure.  Random 3-\SAT{} appears to lack such a polynomial normal form, leaving residual local contestation.

The resulting invariant, \UCSC{}, is best viewed as a research program rather than a completed separation.  It connects trace-level solver behavior, backdoor theory, algebraic CSP dichotomies, and restricted average-case lower bounds.  Proving large \UCSC{} for arbitrary polynomial time would be a major complexity-theoretic achievement.  Proving it for restricted models is a concrete and informative next step.

\end{document}